\newtheorem{lemma}{Lemma}
\def\fref#1{Fig.~\ref{#1}}
\def\EE{\begin{pmatrix}E\\1\end{pmatrix}}
\def\EET{(E,-1)}
\def\eref#1{(\ref{#1})}
\def\tabref#1{Table~\ref{#1}}
\def\OO{\mathcal{O}}
\newcommand{\pare}[1] {\left( #1 \right)}
\newcommand{\tpsi}{\psi}
\newcommand{\ham}{\mathit{ H}}
\def\TT{\mathcal{T}}
\def\JJ{\mathcal{J}}
\newcommand{\teb}{\TT^E}
\def\sh{\sigma_{\rm H}}
\begin{document}

\title{The Colored Hofstadter Butterfly for the Honeycomb Lattice}
\author{A.~Agazzi${}^1$, J.-P.~Eckmann${}^{1,2}$, and G.M.~Graf${}^{~3}$}
\institute{
${}^1$D\'epartement de Physique Th\'eorique,
Universit\'e de Gen\`eve, CH-1211 Gen\`eve 4 (Switzerland)\\
${}^2$Section de Math\'ematiques,
Universit\'e de Gen\`eve, CH-1211 Gen\`eve 4 (Switzerland)\\
${}^3$Institut f\"ur Theoretische Physik, ETHZ, CH-8093 Z\"urich (Switzerland)
}

\maketitle

\begin{abstract}
We rely on a recent method for determining edge spectra and we use it to compute the Chern numbers for
Hofstadter models on the honeycomb lattice having
rational magnetic flux per unit cell. Based on the bulk-edge
correspondence, the Chern number $\sh $ is given as the winding number of an
eigenvector of a
$2 \times 2$ transfer matrix, 
as a function of the quasi-momentum $k\in (0,2\pi)$. This method is
computationally efficient (of order $\OO(n^4)$ in the resolution of
the desired image). It also shows that for the honeycomb lattice the solution for $\sh $ 
for flux $p/q$ in the $r$-th gap conforms with the Diophantine equation $r=\sh
\cdot p+ s\cdot q$, which determines $\sh\mod q$. A window such as $\sh\in(-q/2,q/2)$, or
possibly shifted, provides a natural further condition for
$\sh$, which however turns out not to be met. Based on extensive numerical calculations, we conjecture that the
solution conforms with the relaxed condition $\sh\in(-q,q)$.
\end{abstract}

\section{Introduction}

The spectral diagram of the Hofstadter model for electrons in the 2-dimensional square
lattice and in presence of a fractional magnetic flux per unit
cell $\Phi/\Phi_0 = p/q$ has become known as the   Hofstadter
butterfly \cite{hofstadter}. The
problem of associating to each spectral gap the corresponding
Chern number, representing the integer quantum Hall conductance $\sh $, 
has been solved in the case of a rectangular lattice potential with
perturbative methods (Thouless \textit{et al.} \cite{TKNN}), by reduction to a Diophantine equation with a simple window condition, the solution of which is unique. The phase diagram representing the values of
$\sh $ as a function of the Fermi energy $E_{\rm F}$ and of $\Phi/\Phi_0$
according to this method has been computed \cite{Avron1} and is
known as the colored Hofstadter butterfly. Recently the colored
Hofstadter butterfly has also been calculated for the triangular lattice
\cite{avron4}. 

Using methods from \cite{ASV} we present results for the analogous problem on the honeycomb lattice, which is in particular the lattice structure of graphene.
The Hofstadter Hamiltonian has been considered for the honeycomb
lattice potential \cite{kreftseiler}, and the corresponding spectral diagram can
be calculated \cite{rammal}. Furthermore, the analysis for the labeling of the 
different phases (gaps) of the spectrum can be generalized to the honeycomb
lattice, leading to exactly the same Diophantine equation. However,
\emph{the constraints on its solutions are not the same as in the rectangular case}, and a
simple algebraic condition for the determination of the Chern numbers
of this problem is still lacking.

We circumvent this problem by computing the Chern numbers of the diagram using
a different approach:   The bulk-edge correspondence \cite{hatsugai, ASV}. By
computing the winding number of the edge eigenstates along a loop in the
Brillouin zone \cite{ASV} it is possible to assign to each gap its Chern number $\sh $. 
The fact that the solution so obtained satisfies
the Diophantine equation is evidence to the reliability of the   
method.

\section{The  natural window condition and its exceptions}

The honeycomb lattice can be viewed as two interpenetrating
triangular lattices (labeled by letters A and B) as displayed in
\fref{fig:graphene}. As a consequence, the wave function of an
electron in the tight binding approximation on this bipartite lattice can be
written as the spinor
\begin{equation*}
\psi_{m,n} = \begin{pmatrix} \psi_{m, n}^A \\ \psi_{m, n}^B \end{pmatrix}
\in \mathbb{C}^2~, \quad \psi = \pare{\psi_{m,n}}_{m,n} \in \mathcal{H} = \ell^2(\mathbb{Z}^2;\mathbb{C}^2)~,
\end{equation*}
where $(m,n)$ label the sites on the Bravais sublattice $A$ (or $B$).
\begin{figure}[]
  \centering
     \def\svgwidth{0.4\textwidth}
  \input{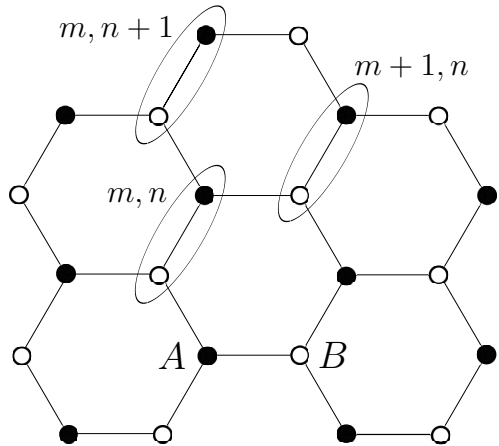}
  \caption{Coordinate system on the honeycomb lattice structure.}
  \label{fig:graphene}
\end{figure}
We define the nearest-neighbor (\textit{NN}) \textit{magnetic hopping}
operators $T_i: \mathcal{H} \rightarrow \mathcal{H}, (i = 1,2,3)$ by
\begin{eqnarray} \label{e:gauge}
(T_1 \psi)_{m,n} &=& \begin{pmatrix} \psi_{m, n}^B \\ \psi_{m, n}^A \end{pmatrix}~, \nonumber\\
(T_2 \psi)_{m,n} &=& \begin{pmatrix} \psi^B_{m+1,n} \\ \psi^A_{m-1,n} \end{pmatrix}~,\\
(T_3 \psi)_{m,n} &=& \begin{pmatrix} e^{- 2 \pi i \Phi m} \ \psi^B_{m,n+1} \\ e^{2 \pi i \Phi m} \ \psi^A_{m,n-1} \end{pmatrix}~, \nonumber
\end{eqnarray}
and this fixes our choice of gauge.
The Hofstadter model on the honeycomb lattice
is an isotropic \textit{NN} hopping Hamiltonian
which can be written as
\begin{equation*}
\ham = \sum_{j = 1}^3 T_j~ .
\end{equation*}

Because of the translation invariance of $\ham$ in the $n$-direction, a Bloch decomposition in that direction can be performed:
\begin{equ}
\psi_{m,n} = \int_0^{2 \pi} \kern -1em \textrm{d}k\, \psi_m(k) e^{i k n}~.
\end{equ}
This transformation fibers $\ham$, which can now be written as  
\begin{equ}
\label{e:ham}
(\ham(k) {\psi}(k))_m = A^*(k) {\psi}_{m-1} (k) + V_m(k)
{\psi}_m (k) + A(k) {\psi}_{m+1} (k)~,
\end{equ}
where
\begin{equ}
A(k) = \begin{pmatrix}
   0   &  1  \\
     0 & 0
\end{pmatrix} \quad \textrm{and}\quad
V_{m} (k) =  \begin{pmatrix}
   0   &1 + e^{i  (k - 2 \pi \Phi m)}   \\
      1 + e^{-i  (k - 2 \pi \Phi m)} & 0
\end{pmatrix}~.
\end{equ}
\subsection{The Diophantine equation}
The magnetic field reduces the translation symmetry of the Hamiltonian.
In the case of rational fluxes $\Phi = p/q$, the gauge underlying
\eref{e:gauge} increases the translation invariance period of the Hamiltonian in the
$m$-direction from $1$ to $q$ lattice units. (In the $n$ direction the
translation invariance is not broken.) A  
Bloch decomposition in the $m$-direction can thus be performed and, by plotting the
eigenvalues of the Hamiltonian fiber by fiber, the (black and white) Hofstadter
butterfly of the   
honeycomb lattice is drawn \cite{rammal}.
Moreover, by performing an analysis similar to the one carried out in
\cite{TKNN} or by using that $\ham$ commutes with magnetic translations, one can see that the Hall
conductivity in the $r$-th gap satisfies the Diophantine equation \cite{dana, avron4}
\begin{equation}
\label{diophantine}
r= \sh \cdot p + s \cdot q~.
\end{equation}
This is the same equation as in the square lattice case. 
It clearly
determines $\sh $ up to a $\textrm{mod }q$ ambiguity. 
The
 natural ``window condition''  \cite{TKNN},
\begin{equ}\label{e:window}
\sh  \in \pare{-q/2,q/2}~,
\end{equ}
resolves this ambiguity in the square lattice case.
Unfortunately, \eref{e:window} neither   holds for the
triangular lattice \cite{avron4} nor for the honeycomb lattice, which
we study here. We illustrate this in \tabref{tab:table1} for the
honeycomb lattice. 

\begin{table}[h]
\begin{center} 
\scalebox{1}{
\begin{tabular}{
|c|r@{ }c@{}l|}
\hline
$1/5$ & $1, 2, \leavevmode\cancel{-2}$$\rightarrow$$3, -1, $&$0$&$, 1, \leavevmode\cancel{2}$$\rightarrow$$-3, -2, -1$\\
$1/6$ & $1, 2, 3, \leavevmode\cancel {-2}$$\rightarrow$$4, -1,$& $0$ &$, 1, \leavevmode\cancel {2}$$\rightarrow$$-4, -3, -2, -1$\\
$1/7$ & $1, 2, 3, \leavevmode\cancel {-3}$$\rightarrow$$4, \leavevmode\cancel {-2}$$\rightarrow$$5, -1, $&$0$&$, 1, \leavevmode\cancel{2}$$\rightarrow$$-5, \leavevmode\cancel {3}$$\rightarrow$$-4, -3, -2, -1$\\
$2/5$ & $\leavevmode\cancel{-2}$$\rightarrow$$3, 1, -1, 2, $&$0$&$, -2, 1, -1, \leavevmode\cancel {2}$$\rightarrow$$-3$\\
\hline
\end{tabular}
}
\end{center}
\caption{Failure of the  natural window condition \eref{e:window} for
  $p/q = 1/5$, $1/6$, $1/7$, $2/5$. There are $2q-1$ gaps, and in some
  of them the  natural window condition does not predict the correct value of $\sh $. For example, the
  first wrong prediction is $-2$, but the correct value of $\sh $ is $3$.
 }\label{tab:table1}
\end{table}

This is evidence
to the fact that there are topological
obstructions to the adiabatic deformation of one lattice into the other, {\it{i.e.}},
spectral gaps are in general closing when the lattice is deformed as a
consequence of the von Neumann-Wigner theorem. We expect such problems
to occur in other lattices as well.

A way around this problem would
be to compute $\sh $ using the Streda formula \cite{streda}.   
This, however, leads to a computational cost  
which grows exponentially with $q$ which and limits in turn   
the resolution of the output
image. Several other attempts towards this objective have been made
\cite{macdonald2, hatsugai2, avron4} but no conditions have been found yet that are both
general and simple to apply.  
As a result the computation of the colored Hofstadter butterfly still requires some effort.

\section{The Chern number through bulk-edge correspondence}

Instead of considering a bulk honeycomb lattice (infinite in both spatial dimensions) we shall consider one which is infinite in one spatial direction and semi-infinite in the other: An edge honeycomb lattice. The edge profile is assumed to be of zigzag-type, \textit{i.e.}, as the left edge in \fref{fig:graphene}.
As a consequence, instead of a 2-dimensional periodic Hamiltonian $\ham$, we consider
a Hamiltonian $\widehat{\ham}$ whose action is restricted to the half-space
Hilbert space $\widehat{\mathcal{H}} = \ell^2(\mathbb{N} \times
\mathbb{Z};\mathbb{C}^2)$. Still, $\widehat{\ham}$ can be
fibered by Bloch decomposition in the unbroken symmetry direction,
with fibers $\widehat{\ham}(k)$. Since $\ham(k) =
\widehat{\ham}(k) 
\oplus \widehat{\ham}(k) \oplus R(k)$ with $R(k)$
being a finite rank 
perturbation, $\ham(k)$ and $ \widehat{\ham}(k)$ share the same essential
spectrum:
\begin{equation*}
\label{ }
\sigma_{\rm ess}(\widehat{\ham}(k)) = \sigma_{\rm ess}(\ham(k))~.
\end{equation*}

Hence, the two spectra differ at most by a discrete spectrum in the band
gaps of $\ham(k)$. As a function of $k$, the discrete eigenvalues of 
$\widehat\ham(k)$ give rise to lines as shown in \fref{fig:edge}. 
Collectively they form the edge spectrum. 
By bulk-edge correspondence \cite{hatsugai},
the (signed) number of such eigenvalues crossing a fixed energy 
in a spectral gap of $\ham$ equals the Chern number of that gap. 
Physically, that energy is the Fermi
energy $E_{\rm F}$ and that number the Hall conductance $\sh $ (in units of $e^2/h$). 
\subsection{Detection of edge states}

The task of finding the edge spectrum at an energy $E$ in a gap of the
bulk spectrum  
can be solved based on methods developed in
\cite{ASV,porta}. 
For the convenience of the reader we adapt the proofs to the present,
simpler setting. The Hamiltonian $\widehat \ham$ is $1$-periodic in the 
$n$-direction and $q$-periodic in the $m$-direction (in
the half-space). 
In analogy to \eref{e:ham}, the Schr\"odinger equation for $\widehat\ham(k)$ can be written as
\begin{equ}
\label{e:schroed}
 \begin{pmatrix} a_m \psi_{m}^B(k)   + \psi_{m+1}^B(k) \\ \bar{a}_m\psi_{m}^A(k) + \psi_{m-1}^A(k) \end{pmatrix} = E \begin{pmatrix} \psi_{m}^A(k) \\ \psi_{m}^B(k) \end{pmatrix}~,
\end{equ}
with  $a_m =a_m(k)= e^{i  k - 2 \pi i m p/q} + 1$.
Solving with respect to $\tpsi_{m}^A$, $\tpsi_{m+1}^B$ we obtain, analogously to \cite{hatsugai2, bernevig}, 
\begin{equ}
\begin{pmatrix} 
\psi_{m+1}^B(k) \\ \psi_{m}^A(k) \end{pmatrix} = \TT_{m}^E(k) \begin{pmatrix}  
\psi_{m}^B(k) \\ \psi_{m-1}^A(k) \end{pmatrix}~,
\end{equ}
with
\begin{equ}
\label{e:31a}
\TT_{m}^E(k) = \frac{1}{\bar{a}_m}\begin{pmatrix}
       {E}^2-|a_m|^2  & - E~ \\
     E &  -1~
\end{pmatrix}~.
\end{equ}

Combining these matrices we define the transfer operator
over a period of $q$ steps (in the $m$-direction) by
\begin{equ}
\label{e:31b}
\TT^E(k) =  \prod_{m = 1}^q \TT_{m}^E(k)~=
~\TT_{q}^E(k)  \cdots \TT_{1}^E(k)~.
\end{equ}
With this definition we have
\begin{equ}
\label{e:31c}
\begin{pmatrix} {\psi}_{m+
q}^B(k) \\ {\psi}_{m + q - 1}^A(k) \end{pmatrix} = \teb(k)
\begin{pmatrix} {\psi}_{m}^B(k) \\ {\psi}_{m-1}^A (k)\end{pmatrix}~.
\end{equ}
\begin{figure}[h]
  \centering
     \def\svgwidth{0.55\textwidth}
  \input{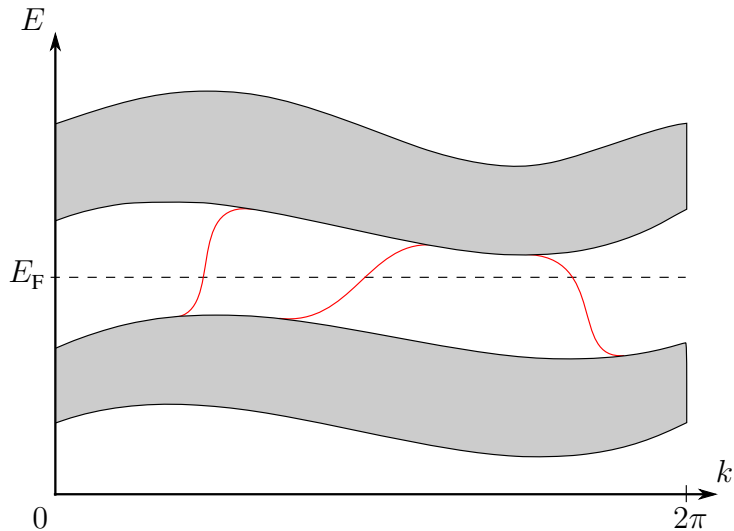}
  \caption{Intersections of the edge-spectrum (red solid lines) with
    a fixed level of the Fermi energy $E_{\rm F}$ (dashed line). There are 2
    positive intersections and a negative one.}
     \label{fig:edge}
\end{figure}

Let $E$ be an energy in the gap, so that $\teb(k)$ is hyperbolic, in the sense that its eigenvalues satisfy $|\lambda| \neq 1$. An edge state 
of energy $E$ is a solution of \eref{e:schroed} and thus of
\eref{e:31c} that vanishes  
at the boundary ($\psi_{-1}^A(k) = 0$)   and decays for $m \rightarrow \infty$.
It exists \textit{iff} the contracting eigenvector $\Omega(k) = \pare{{\psi}_{0
}^B(k), {\psi}_{- 1}^A(k)}$ satisfies the boundary condition, \textit{i.e.}, {\textit{iff}}
\begin{equ}
\label{e:bc}
\Omega(k) \sim \begin{pmatrix} 1 \\0 \end{pmatrix}~,
\end{equ}
where $\sim$ stands for proportionality. 
Edge state energies occur as branches $E(k)$ depending continuously
(in fact, analytically) on $k$. 
\begin{lemma}\label{l:equal}
The count $N$ of the (signed) {\it eigenvalue} 
 crossings at a fixed energy is half the count of the (signed)
 number of times condition
 \eref{e:bc} is traversed as the {\it eigenvector} winds with $k$. 
\end{lemma}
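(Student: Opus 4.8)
The plan is to reduce the statement to a winding count on a circle, with all signs controlled by a monotonicity in $E$, and then to account for a single factor of two. First I would make the winding genuinely circle-valued. Pulling the scalar $1/\bar a_m$ out of each factor in \eref{e:31a} gives $\teb(k)=\big(\prod_{m=1}^q\bar a_m^{-1}\big)\,M^E(k)$, where $M^E(k)$ is a \emph{real} matrix for real $E$ (once the prefactor is removed, the bracket in \eref{e:31a} has real entries, since $|a_m|^2\in\mathbb{R}$). Hence for $E$ in the gap, where $\teb(k)$ is hyperbolic with real eigenvalues, the contracting eigenvector $\Omega(k)$ is a \emph{real} direction, i.e.\ a point of $\mathbb{RP}^1\cong S^1$. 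This is what turns both ``the winding of the eigenvector'' and ``the number of times \eref{e:bc} is traversed'' into honest, integer-valued notions, stable under the genericity assumption that the crossings are transversal.

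The analytic heart is a monotonicity statement: at fixed $k$, as $E$ increases across the gap, the contracting direction $\Omega(E,k)$ rotates strictly monotonically on $\mathbb{RP}^1$. I would obtain this by Cayley-transforming the contracting direction into the half-line Weyl--Titchmarsh function of \eref{e:schroed}; being Herglotz in the spectral parameter, it is monotone in $E$. (Alternatively, one differentiates the eigenprojection of $M^E(k)$ in $E$ and reads off sign-definiteness directly from \eref{e:31a}.) Monotonicity is precisely what makes an edge-state branch $E(k)$ cross a fixed level transversally \emph{exactly} when $\Omega$ crosses \eref{e:bc}, and with coherent signs: the sign of the spectral crossing and the sign of the directional crossing of \eref{e:bc} differ only by the constant sign of $\partial_E\Omega$. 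Up to a global orientation, the signed spectral flow $N$ thus equals the signed count of passages of the \emph{direction} $\Omega(k)\in\mathbb{RP}^1$ through the point $[(1,0)]$.

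The remaining, and genuinely delicate, step is to pin the factor $\tfrac12$. Here the relevant distinction is between $\Omega$ as a \emph{direction} in $\mathbb{RP}^1$ and $\Omega$ as a genuine \emph{vector} in $S^1$: the boundary condition \eref{e:bc} is projective, being met both by $(1,0)$ and by its antipode $-(1,0)$. The quantity ``the number of times \eref{e:bc} is traversed as the eigenvector winds'' should be read as passages of the genuine vector through the antipodal pair $\{\pm(1,0)\}$ along $k\in(0,2\pi)$, and the claim $N=\tfrac12(\text{traversal count})$ is exactly the assertion that this count is twice the winding computed in the previous paragraph. I would establish this by a degree argument on the cylinder $\{E\in\text{gap}\}\times S^1_k$, using the two-to-one cover $S^1\to\mathbb{RP}^1$ and homotopy invariance in the hyperbolic region (which also re-proves that $N$ is a gap invariant, independent of $E_{\rm F}$), and then reconciling orientations across the two band edges bounding the gap.

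The hard part will be exactly this last bookkeeping: showing that the constant relating the two counts is precisely two (not one), with the correct global sign, rather than the monotonicity input, which is standard. Two technical points feed into it. At the band edges $\teb(k)$ becomes parabolic and the contracting and expanding directions merge, so I must check that $\Omega$ extends continuously and that the winding is taken in the open hyperbolic region. Second, at the isolated momenta with $a_m(k)=0$ the prefactor $1/\bar a_m$ blows up and $\TT_{m}^E(k)$ is singular; I would verify that the \emph{product} $\teb(k)$ and the direction $\Omega(k)$ nevertheless extend continuously across these $k$, so that the loop on $\mathbb{RP}^1$ is well defined and the doubling argument applies globally. Granting monotonicity and continuity across these degenerate $k$, the remainder is a routine transversality/degree count.
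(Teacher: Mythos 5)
Your overall architecture is the paper's: you use the reality (up to a scalar) of the matrices \eref{e:31a} to make the winding of the contracting eigenvector well defined, a Sturm-type monotonicity in $E$ to make the signs of spectral crossings and boundary-condition crossings coherent, and the antipodal identification to account for a factor of two, deferring the $a_m=0$ points exactly as the paper does (its Appendix). Your one substantive variation is the route to monotonicity: you would quote the Herglotz property of the half-line Weyl--Titchmarsh function, whereas the paper proves the needed inequality \eref{e:sturm} self-containedly from current conservation \eref{e:sym} and the positivity \eref{e:tmjtm}, which propagates through the $q$-fold product to give $\TT^*\JJ\,{\rm d}\TT/{\rm d}E>0$, \eref{e:4gr}, and hence ${\rm d}b/{\rm d}E>0$ via the eigenvalue equation. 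That substitution is legitimate---the two statements are the same classical fact---though the paper's computation also fixes the orientation that you leave open (``up to a global orientation''), and fixing the relative sign is part of what the lemma asserts.

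The genuine problem is in the step you single out as the crux. You propose to prove, by a degree argument on the cylinder and the double cover $S^1\to\mathbb{RP}^1$, that the number of passages of the unit vector $\Omega(k)$ through the antipodal pair $\{\pm(1,0)\}$ is twice the number of passages of the direction $[\Omega(k)]\in\mathbb{RP}^1$ through the point $[(1,0)]$. This identity is false: the vector sits at one of the two antipodes exactly when its direction is horizontal, so the two counts are in bijection and are equal (both equal $N$). The factor of two lives elsewhere: condition \eref{e:bc} is met twice per \emph{full turn} of the vector, i.e.\ the traversal count is twice the winding number of $\Omega(k)$ as an $S^1$-valued map (when that loop closes at all; for odd projective degree it closes only after two $k$-periods). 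This is precisely why the doubled phase $(a+ib)/(a-ib)$, rather than $a+ib$, appears in the paper's formula \eref{e:theformula}, and it is dispatched there in one line---no covering-space or homotopy machinery is needed, and building such machinery toward the false version would fail. Relatedly, two of your technical worries are idle: since $E$ is fixed strictly inside the gap, $\teb(k)$ is hyperbolic for \emph{every} $k$, so parabolic band-edge degenerations never occur along the loop; and homotopy invariance in $E$ across the gap is the separate statement that $N$ is a gap invariant, not something this lemma needs. Your allocation of effort is thus inverted relative to the paper: the monotonicity you call standard is where the paper invests essentially its entire proof, and the bookkeeping you call genuinely delicate is its one-line remark.
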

\begin{proof}
 For the first count,
 $N$, the crossing is counted positively for decreasing
 eigenvalue branches, $E'(k) < 0$.\footnote{This seemingly unnatural
 prescription reflects the negative orientation of the edge, which has
 its outward normal pointing left.} 

To study the second count, we first note that $\TT_m=\TT_m^E(k)$ is symplectic
(for real energies $E$) in the sense that
\begin{equ}\label{e:simpl}
\TT_m^* \JJ \TT_m=\JJ~,
\end{equ}
with respect to the bilinear form given by
 $$
 \JJ = \begin{pmatrix} 0 & -1 \\ 1 & 0 \end{pmatrix} = -\JJ^*~.
 $$
Therefore, by \eref{e:31b}, we see that $\TT=\TT^E(k)$ is symplectic
as well:
\begin{equ}
 \label{e:sym}
 \TT^*\JJ\TT = \JJ~.
 \end{equ}
This is physically interpreted as current conservation. 

We next
 observe that the matrix $\teb(k)$ is real up to a factor, since the
 $\TT_m^E(k)$ in \eref{e:31a} are real up to a factor. Being $\TT$
 symplectic we have $|{\rm det}\, \TT| = 1$ and the contracting
 eigenvalue is simple. 
 Its
  eigenvector can thus be chosen to be real, $\Omega(k) = (a(k),b(k))
 \in \mathbb{R}^2$, which makes the notion of winding clear. Note that
 \eref{e:bc} will hold \emph{twice} per 
full turn of $\Omega(k)$. Therefore, we get the identity
 \begin{equ}\label{e:theformula}
 N = \int_0^{2\pi} \frac{{\rm d}k}{2 \pi i}\, \frac{\partial}{\partial k}\, \log\left(\frac{a(k) + i  b(k)}{a(k) - i  b(k)}\right)~.
 \end{equ}
  In view of the
 explanations leading to \eref{e:bc}, we need to show
 that the relative signs of the two counts are always the same. To this end
 we reinstate the dependence on $E$ in $\Omega(E,k)$ and make the
 vector locally unique by imposing $a(E,k) = 1$. Then $b(E(k),k)$ = 0,
and differentiating we get
 $$
 \frac{\partial b}{\partial E} E'(k) + \frac{\partial b}{\partial k} = 0~.
 $$
  The claim will now follow immediately from
 \begin{equ}
 \label{e:sturm}
  \frac{\partial b}{\partial E} > 0
 \end{equ}
 at crossing points, which is proven below. 
 In the following $k$ is fixed but we also suppress the dependency on
 $E$ from the notation. In addition to \eref{e:sym},
explicit calculation shows that
\begin{equ}\label{e:tmjtm}
 \TT_m^*\JJ\frac{{\rm d} \TT_m}{{\rm d} E} = \frac{1}{|a_m|^2}\begin{pmatrix}
       {E}^2+|a_m|^2  & - E~ \\
     -E &  1~
\end{pmatrix} > 0~.
\end{equ}
Using $\TT_m^* \JJ \TT_m=\JJ$ and
$$
\frac{{\rm d} \TT}{{\rm d} E} = \sum_{m = 1}^q \TT_q \cdots \frac{{\rm d} \TT_m}{{\rm d} E} \cdots \TT_1~,
$$
we conclude from \eref{e:tmjtm} that
 \begin{equ}
 \label{e:4gr}
 \TT^*\JJ\frac{{\rm d} \TT}{{\rm d} E}>0~.
 \end{equ}

From the definition of $\Omega$, we find
\begin{equ}\label{e:thisone}
\Omega^*\JJ\frac{{\rm d}\Omega}{{\rm d}E} = \bar{\psi}_{-1}^A \frac{{\rm d}\psi_0^B}{{\rm d} E} - \bar{\psi}_{0}^B \frac{{\rm d}\psi_{-1}^A}{{\rm d} E} = -\frac{{\rm d} b}{{\rm d} E}~,
\end{equ}
where the last equality holds at crossing points. 
We next 
differentiate $\pare{\TT-\lambda} \Omega = 0$, where $\lambda$ is the
contracting eigenvalue ($|\lambda| < 1$), and obtain 
$$
\pare{\frac{{\rm d}\TT}{{\rm d} E} -\frac{{\rm d} \lambda}{{\rm d} E}}\Omega + (\TT-\lambda) \frac{{\rm d}\Omega}{{\rm d} E} = 0~.
$$
Finally, we multiply the last equality from the left by
$\Omega^*\TT^*\JJ = \bar{\lambda}\Omega^*\JJ$, and using (by the reality of omega) $\Omega^* \JJ
\Omega=0$, we get
$$
\Omega^*\TT^*\JJ \frac{{\rm d}\TT}{{\rm d} E} \Omega + \pare{1 - |\lambda|^2} \Omega^*\JJ \frac{{\rm d}\Omega}{{\rm d} E} = 0~,
$$
where we used \eref{e:sym}. In view of \eref{e:4gr} and of  
\eref{e:thisone} we conclude ${\rm d} b / {\rm d} E > 0$. 
 \end{proof}
 
 As shown in the Appendix, points $k$ where $a_m = 0$ do not invalidate the above argument.

\begin{figure}[]
  \centering
     \def\svgwidth{0.68\textwidth}
  \input{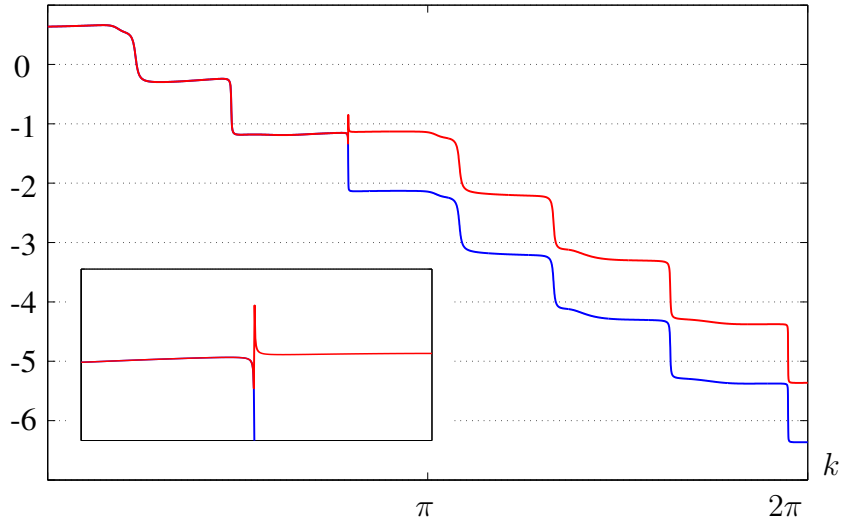}
  \caption{Evolution of the phase $\theta(k)/2 \pi$ (on the $y$-axis) as a function of $k \in [0,2 \pi)$ (on the $x$-axis) for values of $p/q = 8/19$, $r = 1$. The total winding number of this function is equal to the
Chern number associated to the gap. A too low discretization of the interval (red curve, $200 \cdot q$) misses a phase turn as shown in the inset; a higher discretization (blue curve, $300 \cdot q$) resolves the problem. }
   \label{fig:winding}
\end{figure}

\subsection{Numerical implementation}

We compute the Chern number in each of the spectral gaps of the  
Hamiltonian using the r.h.s.~of \eref{e:theformula}. This involves finding the
eigenvectors of $\teb(k)$ for $E$ in a gap, and for sufficiently many
$k$ so that the variation of the phase $\theta(k)$ given by the logarithm in \eref{e:theformula} can be computed with confidence.
The issue here is that one must be able to resolve the continuity
of $\theta(k)$ when $\partial \theta / \partial k \gg 1$ as shown in \fref{fig:winding}.

\psfragscanoff
\begin{figure}[h]
  \centering
  \includegraphics[width=0.5\linewidth]{./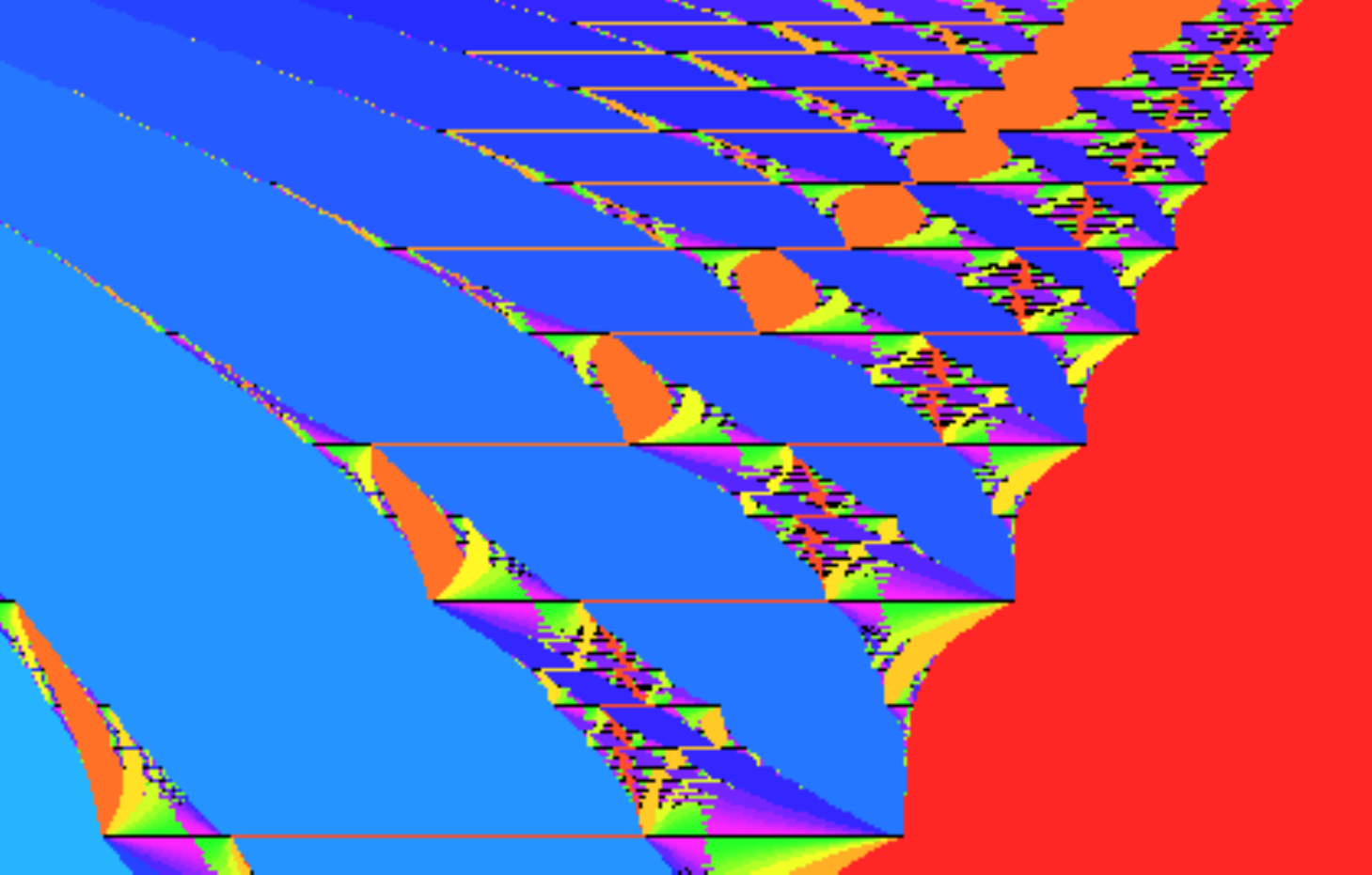}\,\includegraphics[width=0.5\linewidth]{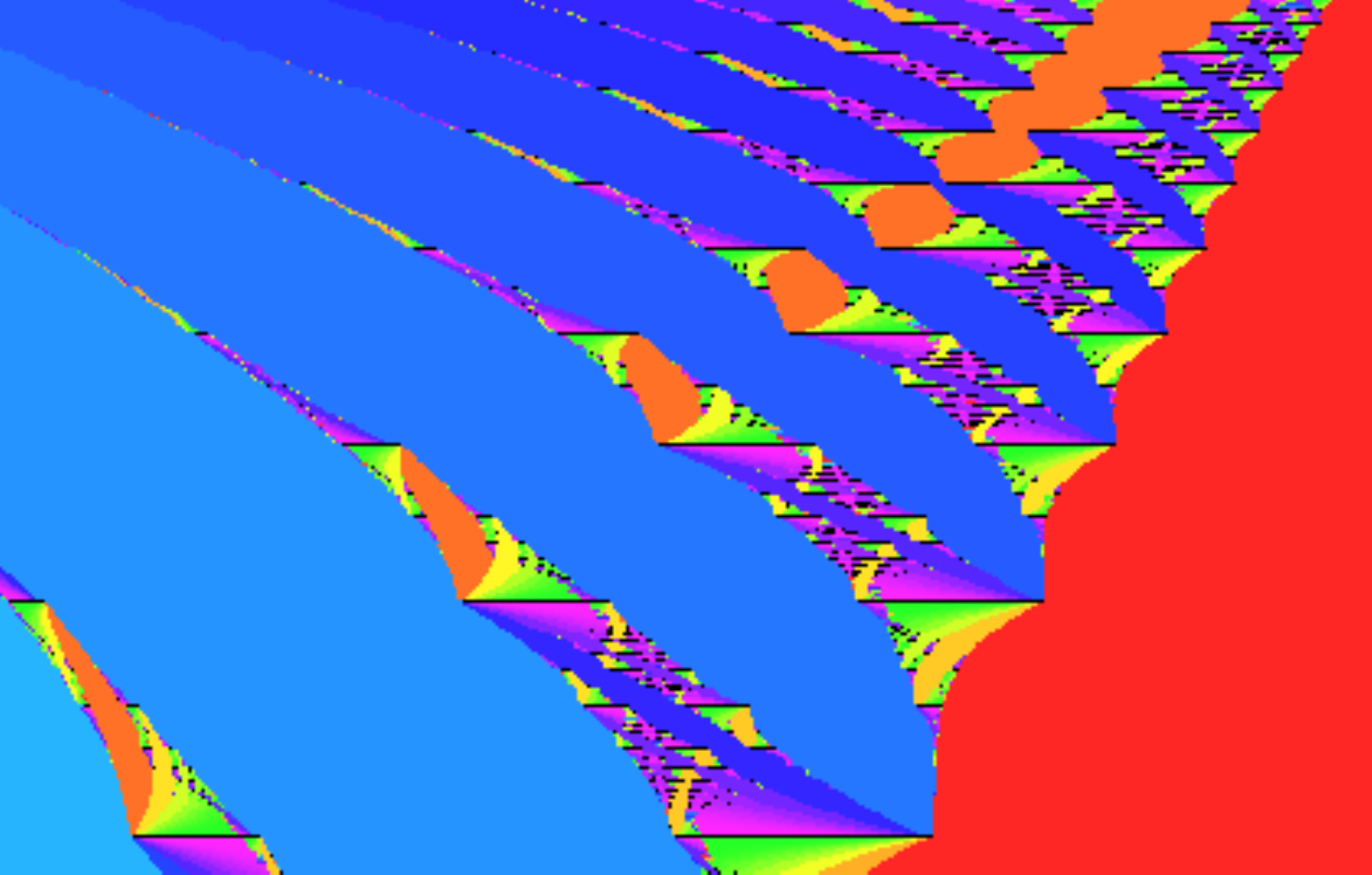}
  \caption{Comparison of the Chern numbers obtained from the  natural window
    condition \eref{e:window} (left) and the
    correct ones (right, using the bulk-edge correspondence) for the honeycomb
    lattice. Note the errors (orange horizontal lines on the left).
The horizontal axis represents the energy, the vertical axis the applied
magnetic field. Different colors in the figure represent different values of
the Chern numbers.}
   \label{fig:errors}
\end{figure}
\psfragscanon
We compute the Chern number for each gap $(e_1,e_2)$ numerically, in
the center of the gap.
We get some integer $\sh =\sh ((e_1+e_2)/2,p,q)$ that should 
be among the solutions of the Diophantine equation
\begin{equ}\label{e:dia2}
r= \sh \cdot p + s \cdot q~.
\end{equ}

If it indeed is,   we are confident that we have found the correct Chern
number. If it is not, we proceed as follows: If it is close to one of
the solutions of \eref{e:dia2}, we take that as the solution. Here,
we define close as $|\sh  -\sigma_*|/q<0.1$, where $\sigma_*$ is a
solution of \eref{e:dia2}. This leads to \fref{fig:butterfly}.

\noindent{\bf Remarks.}\\
\noindent$\bullet$ The computational cost depends on the denominator
$q$ and is of the order $\OO(q^2/\Delta k)=\OO(q^3)$, where $\Delta k$
is the discretization in $k$ which we take as $\Delta k=2\pi/(200
q)$. Going over all possible fractions $p/q$ for fixed $q$ leads to a cost of
at most $\OO(q^4)$.

\noindent$\bullet$  While the algorithm can in principle always
result in a solution of \eref{e:dia2}, there is a limitation to its
success. Whenever an eigenvalue of the matrix $\teb(k)$ comes close to
$|\lambda| = 0$ it contributes significantly to the winding number
within a small interval $\Delta k$. This may require a discretization
that is finer than reasonably doable. 

\noindent$\bullet$ The statistics of \fref{fig:butterfly} are as follows: Of the
$\OO(10^6)$ colored pixels on the figure, 
99.8\% satisfy \eref{e:dia2}, 0.1\% are merely   close to the correct Chern
number, and another 0.1\% are
undecided. The smallest gap that can be represented at this resolution is about
$e_2-e_1=0.037$ wide. We have no
rule to predict which $p$ or $q$ are most likely to lead to
difficulties.

\noindent$\bullet$ Although \eref{e:dia2} allows in principle
arbitrarily large Chern numbers, we conjecture, based on our
calculations, that \emph{the Chern numbers lie in $(-q,q)$}.

\psfragscanoff
\begin{figure}[h!]
  \centering
  \def\svgwidth{1\textwidth}
  \input{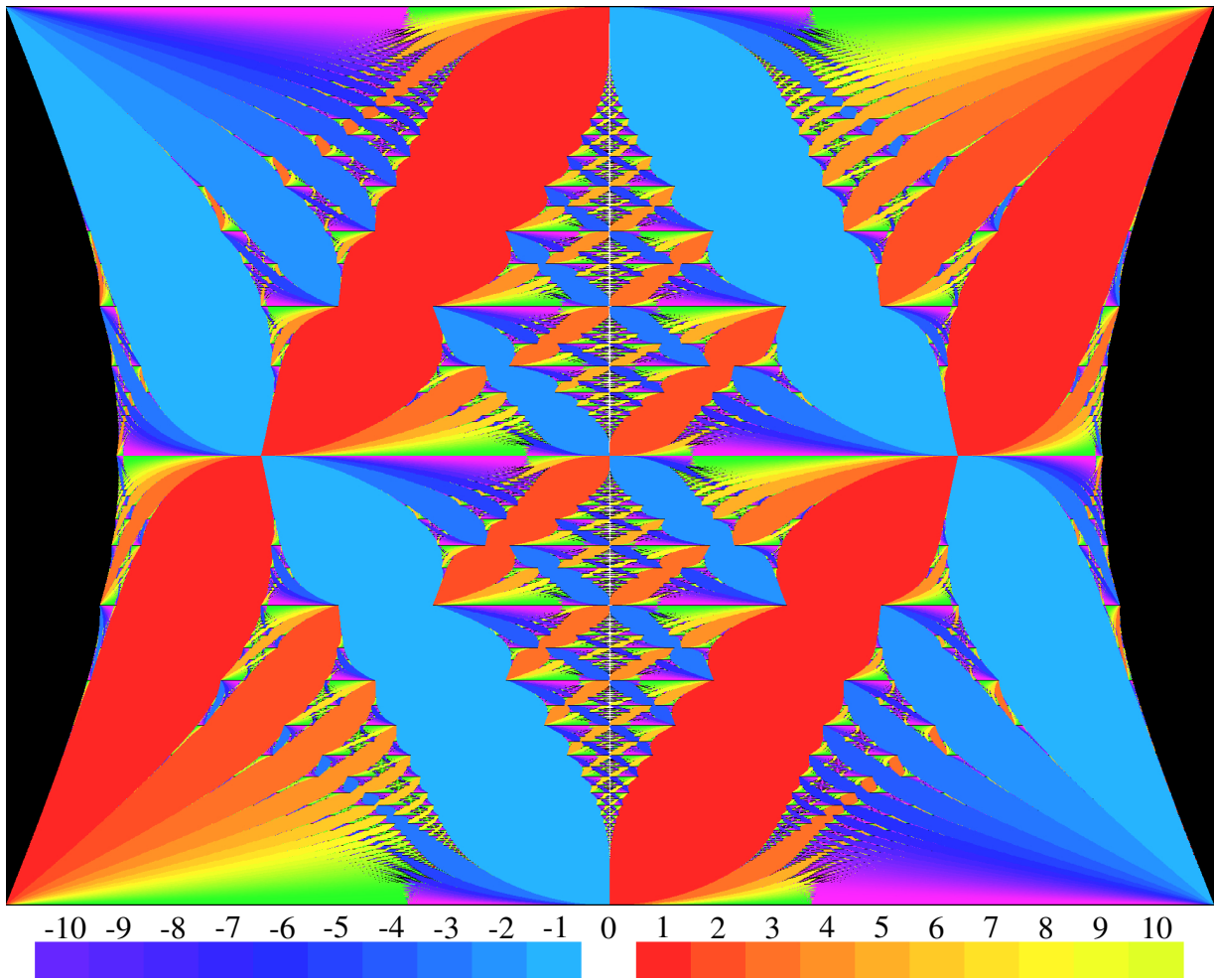}
  \caption{The colored Hofstadter butterfly for the honeycomb lattice,
    as obtained  by the method of this paper. The
vertical axis is the magnetic flux per unit cell $\Phi$ ranging from $0$ to $1$. The
horizontal axis is the Fermi energy ranging from $-3$ to $3$. The colors represent
the Chern numbers. The resolution of this figure is 1920$\times$1440 and the maximal
value of $q$ is $q_{\rm max} = 720$.}
     \label{fig:butterfly}
\end{figure}
\psfragscanon

\section*{Acknowledgments}

We thank Y.~Avron for discussions and for raising the problem
addressed in this paper. The research of AA and JPE was supported in part by an ERC
Advanced Grant and that of GMG by the Swiss National Science Foundation.

\makeappendix{Appendix}

At values $k$ where $a_m = 0$ the Schr\"odinger equation cannot be
solved by transfer matrices, and we need a slight modification of the
argument.
If $a_m=0$ then\begin{equ}
\label{e:jpamtm}
\bar{a}_m \TT_m =\begin{pmatrix}
   E^2   & -E   \\
     E & -1 
\end{pmatrix} = \EE \EET~,
\end{equ}  
which has rank $1$. Instead of studying $\TT=\TT_q\cdots\TT_1$ we
study the operator $\widetilde \TT= \TT_m \widetilde \TT_m $ which is
similar to $\TT$,
with
\begin{equ}
\label{e:jpttilde}
\widetilde\TT_m= \TT_{m+q-1} \cdots \TT_{m+1}~.
\end{equ}
Because of \eref{e:jpamtm}, one eigenvalue of $\bar a_m\widetilde \TT$ is always
0. The eigenvector of the other eigenvalue of $\bar a_m \widetilde\TT$ must be
proportional to $(E,1)^{\rm T}$, and the eigenvalue is then given by
\begin{equ}
\bar a_m\TT_m\widetilde\TT_m  \EE=\EE \EET \widetilde\TT_m \EE=\lambda \EE~.
\end{equ}
We thus find a second eigenvalue $\lambda \ne0$ (and in particular no
double 0 eigenvalue, {\it i.e.}, a well-defined and continuous eigenvector $\Omega(k)$) unless
\begin{equ}
\label{e:jpperpend}
\EET \widetilde{\TT}_m \EE=0~.
\end{equ}
The next lemma shows  that this never happens, since we are
considering $E\notin \sigma(H(k))$.
\begin{lemma}
If \eref{e:jpperpend} holds then $E \in \sigma(H(k))$.
\end{lemma}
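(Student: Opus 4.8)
The plan is to prove the statement directly, by exhibiting a bulk eigenfunction: I will show that \eref{e:jpperpend} forces the existence of a nonzero $q$-periodic solution of the bulk Schr\"odinger equation \eref{e:schroed} on all of $\mathbb{Z}$. Such a solution is an eigenvector of the finite-dimensional zero quasi-momentum Bloch fiber of $\ham(k)$, and its existence is exactly the assertion $E\in\sigma(\ham(k))$.

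First I would read off the structure of \eref{e:schroed} at the exceptional site $m$ where $a_m=0$. There the equation ceases to propagate: its second row degenerates into the constraint $\psi_{m-1}^A=E\psi_m^B$, i.e.\ the incoming state $(\psi_m^B,\psi_{m-1}^A)$ must lie in $\ker\EET$ (equivalently $\ker(\EE\EET)$, by \eref{e:jpamtm}), while its first row $\psi_{m+1}^B=E\psi_m^A$ leaves $\psi_m^A$ as a free parameter and forces the outgoing state $(\psi_{m+1}^B,\psi_m^A)$ to be proportional to $\EE$. This is precisely the breakdown noted at the start of the Appendix: one unit of determinacy is traded for one scalar constraint.

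Next I would close the period. Choosing the free parameter $\psi_m^A=t\neq 0$ makes the outgoing state equal to $t\,\EE$; propagating it across the $q-1$ regular sites $m+1,\dots,m+q-1$ with the genuine transfer matrices yields the state $t\,\widetilde\TT_m\EE$ at the next exceptional site $m+q$, where $\widetilde\TT_m$ from \eref{e:jpttilde} is well defined because, by the explicit form $a_j(k)=e^{i(k-2\pi jp/q)}+1$ with $\gcd(p,q)=1$, at most one index per period is exceptional. The consistency requirement at that site is again $v_{m+q}\in\ker\EET$, i.e.\ $t\,\EET\widetilde\TT_m\EE=0$, which under \eref{e:jpperpend} holds automatically. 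Hence the loop closes, and taking the free parameter equal to $t$ in every period produces an exactly $q$-periodic, nonzero $\psi$ solving \eref{e:schroed} at every site, the exceptional ones included. Being an eigenvector of the $\kappa=0$ Bloch fiber of $\ham(k)$, it certifies $E\in\sigma(\ham(k))$.

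I expect the main obstacle to be the bookkeeping at the exceptional site: making rigorous the loss of determinacy and the single emergent constraint, and verifying that this constraint is \emph{literally} \eref{e:jpperpend} rather than an accidental scalar multiple of it. Once that degenerate step is pinned down, closing the period and invoking the Bloch decomposition of $\ham(k)$ are routine.
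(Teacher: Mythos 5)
Your proof is correct, and its core coincides with the paper's: both exploit that at an exceptional site the two rows of \eref{e:schroed} decouple into an outgoing condition $(\psi_{m+1}^B,\psi_m^A)\sim(E,1)^{\rm T}$ and an incoming condition $(\psi_m^B,\psi_{m-1}^A)\sim(1,E)^{\rm T}$, and both identify \eref{e:jpperpend} as exactly the matching condition obtained by transporting $(E,1)^{\rm T}$ across one period with $\widetilde\TT_m$. Where you differ is the spectral conclusion. The paper does not periodize: it uses the decoupling a second time to \emph{terminate} the solution, producing a wave function supported on the single period between two consecutive exceptional sites; being compactly supported it is normalizable, so $E$ is a genuine eigenvalue of $H(k)$ (indeed of infinite multiplicity, one eigenfunction per period), and no further spectral theory is needed. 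You instead extend the solution $q$-periodically over all of $\mathbb{Z}$ and invoke the Bloch decomposition of $H(k)$ in the $m$-direction, concluding $E\in\sigma(H(k,\kappa{=}0))\subset\sigma(H(k))$; this is valid, but it leans on the (standard) direct-integral fact that fiber spectra lie in the full spectrum, which the paper's route avoids. The two constructions are related by truncation: precisely because $a_m=0$ cuts the chain, your periodic solution restricted to one period and set to zero elsewhere still solves \eref{e:schroed} at every site, and that truncation \emph{is} the paper's eigenfunction. A merit of your write-up is making explicit that $\gcd(p,q)=1$ ensures at most one exceptional site per period, so that $\widetilde\TT_m$ is well defined --- a point the paper uses only implicitly. (Minor slip: the consistency requirement at site $m+q$ bears on the incoming state $(\psi_{m+q}^B,\psi_{m+q-1}^A)$, not on the state you denote $v_{m+q}$; the scalar equation you write, $t\,(E,-1)\,\widetilde\TT_m (E,1)^{\rm T}=0$, is nevertheless the correct one.)
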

\begin{proof}
First note $\widetilde\TT_m$ is not enough to find a solution of the Schr\"odinger equation. However, in the case $a_m = 0$, \eref{e:schroed} reduces to
$$\begin{pmatrix} 
\psi_{m+1}^B(k) \\ \psi_{m}^A(k) \end{pmatrix} \sim \begin{pmatrix} E \\ 1 \end{pmatrix}~, \qquad \begin{pmatrix} 
\psi_{m+q}^B(k) \\ \psi_{m+q-1}^A(k) \end{pmatrix} \sim \begin{pmatrix} 1 \\ E \end{pmatrix}~.$$
Therefore, the condition $a_m = 0$ and \eref{e:jpperpend} imply that there is a solution of $H(k) \psi = E
\psi$ which is compactly supported on and between sites $(m,B)$, $(m+q,A)$, or
on any interval shifted by multiples of $q$. (Each such state is
normalizable.) 
Thus, $E \in \sigma(H(k))$.
\end{proof}

\bibliographystyle{JPE} 

\bibliography{Bibliography-1-1-2} 

\begin{thebibliography}{10}

\bibitem{ASV}
J.~C. Avila, H.~Schulz-Baldes, and C.~Villegas-Blas.
\newblock Topological invariants of edge states for periodic two-dimensional
  models.
\newblock {\em Math. Phys. Anal. Geom.\/} {\bf 16} (2013), 137--170.

\bibitem{Avron1}
J.~E. Avron.
\newblock Colored {H}ofstadter butterflies.
\newblock In: {\em Multiscale methods in quantum mechanics\/}, Trends Math.
  (Boston, MA: Birkh\"auser Boston, 2004), pp. 11--22.

\bibitem{avron4}
J.~E. {Avron}, O.~{Kenneth}, and G.~{Yehoshua}.
\newblock A numerical study of the window condition for {C}hern numbers of
  {H}ofstadter butterflies.
\newblock {\em \url{http://arxiv.org/abs/1308.3334}\/} .

\bibitem{bernevig}
B.~{Bernevig}, T.~L. {Hughes}, S.-C. {Zhang}, H.-D. {Chen}, and C.~{Wu}.
\newblock {B}and collapse and the quantum {H}all effect in graphene.
\newblock {\em Internat. J. Modern Phys. B\/} {\bf 20} (2006), 3257--3278.

\bibitem{dana}
I.~Dana, Y.~Avron, and J.~Zak.
\newblock {Q}uantised {H}all conductance in a perfect crystal.
\newblock {\em J. Phys. C\/} {\bf 18} (1985), L679.

\bibitem{hatsugai}
Y.~Hatsugai.
\newblock {C}hern number and edge states in the integer quantum {H}all effect.
\newblock {\em Phys. Rev. Lett.\/} {\bf 71} (1993), 3697--3700.

\bibitem{hatsugai2}
Y.~Hatsugai, T.~Fukui, and H.~Aoki.
\newblock Topological analysis of the quantum {H}all effect in graphene:
  {D}irac-{F}ermi transition across van {H}ove singularities and edge versus
  bulk quantum numbers.
\newblock {\em Phys. Rev. B\/} {\bf 74} (2006), 205414.

\bibitem{hofstadter}
D.~R. Hofstadter.
\newblock Energy levels and wave functions of {B}loch electrons in rational and
  irrational magnetic fields.
\newblock {\em Phys. Rev. B\/} {\bf 14} (1976), 2239--2249.

\bibitem{kreftseiler}
C.~Kreft and R.~Seiler.
\newblock Models of the {H}ofstadter-type.
\newblock {\em J. Math. Phys.\/} {\bf 37} (1996), 5207--5243.

\bibitem{macdonald2}
A.~H. {MacDonald}.
\newblock {Quantized Hall effect in a hexagonal periodic potential}.
\newblock {\em Phys. Rev. B\/} {\bf 29} (1984), 3057--3065.

\bibitem{porta}
M.~Porta and G.~M. Graf.
\newblock Bulk-edge correspondence for two-dimensional topological insulators.
\newblock {\em Commun. Math. Phys.\/} {\bf 324} (2013), 851--895.

\bibitem{rammal}
R.~{Rammal}.
\newblock Landau level spectrum of {B}loch electrons in a honeycomb lattice.
\newblock {\em J. Physique\/} {\bf 46} (1985), 1345--1354.

\bibitem{streda}
P.~Streda.
\newblock Theory of quantised {H}all conductivity in two dimensions.
\newblock {\em J. Phys. C\/} {\bf 15} (1982), L717--L721.

\bibitem{TKNN}
D.~J. Thouless, M.~Kohmoto, M.~P. Nightingale, and M.~den Nijs.
\newblock Quantized {H}all conductance in a two-dimensional periodic potential.
\newblock {\em Phys. Rev. Lett.\/} {\bf 49} (1982), 405--408.

\end{thebibliography}

\end{document}